\documentclass[sigplan]{acmart}

\usepackage{subcaption}
\usepackage{graphicx}
\usepackage{amsfonts}
\usepackage{multirow}
\usepackage{url}
\usepackage{listings}
\usepackage{tikz}
\usetikzlibrary{matrix}

\usetikzlibrary{shapes.geometric}
\tikzstyle{every node}=[font=\footnotesize]
\tikzstyle{etichetta}=[]
\tikzstyle{square}=[draw,outer sep=0pt,inner sep=-1.3em,regular polygon,regular polygon sides=4,,minimum size=13.5mm]
\tikzstyle{square2}=[square,dotted,black!50]
\pgfmathsetmacro{\ra}{0.95}
\usetikzlibrary{decorations.pathreplacing,calc}

\usepackage{listings}
\lstdefinelanguage{json}{
    showspaces=false,
    showtabs=false,
    breaklines=false,
    breakatwhitespace=false,
    basicstyle=\ttfamily\footnotesize,
    upquote=true,
    morestring=[b]",
    literate={♭}{$\flat$}1
}
\usepackage{algorithm}
\usepackage{algpseudocode}

\algnewcommand\algorithmicswitch{\textbf{switch}}
\algnewcommand\algorithmiccase{\textbf{case}}
\algnewcommand\algorithmicdefault{\textbf{default}}
\algnewcommand\algorithmicassert{\texttt{assert}}
\algnewcommand\Assert[1]{\State \algorithmicassert(#1)}%
\algdef{SE}[SWITCH]{Switch}{EndSwitch}[1]{\algorithmicswitch\ #1\ \algorithmicdo}{\algorithmicend\ \algorithmicswitch}%
\algdef{SE}[CASE]{Case}{EndCase}[1]{\algorithmiccase\ #1}{\algorithmicend\ \algorithmiccase}%
\algdef{SE}[DEFAULT]{Default}{EndCase}[1]{\algorithmicdefault\ #1}{\algorithmicend\ \algorithmiccase}%
\algtext*{EndSwitch}%
\algtext*{EndCase}%
\algnewcommand\algorithmicforeach{\textbf{for each}}
\algdef{S}[FOR]{ForEach}[1]{\algorithmicforeach\ #1\ \algorithmicdo}

\renewcommand\footnotetextcopyrightpermission[1]{}
\acmConference[]{}{}

\begin{document}

\title{A Composable CRDT Layer for Byzantine-Resilient Deterministic Reconstruction}

\author{Amos Brocco}
\email{amos.brocco@supsi.ch}
\orcid{0000-0002-0262-2044}
\affiliation{%
  \institution{University of Applied Sciences and Arts of Southern Switzerland}
  \city{Lugano}
  \country{Switzerland}
}

\begin{abstract}

Conflict-free Replicated Data Types (CRDTs) ensure Strong Eventual Consistency without coordination, but typically assume benign participants and rely on validation or exclusion to handle Byzantine behavior.

We address this problem through \emph{deterministic state reconstruction}: rather than deciding which updates are admissible, all accepted updates are incorporated, while only a subset contributes to the reconstructed state. We instantiate this approach in \emph{Melda}, a non-intrusive delta-state CRDT for JSON documents, and show that its reconstruction model guarantees convergence even under arbitrary update injection: adversarial updates are either structurally rejected or treated as inputs to the reconstruction process.

We formalize this model and prove that replicas deriving state from the same set of updates cannot diverge despite equivocation, omission, or message reordering. We further show that authentication, authorization, and confidentiality can be layered without affecting convergence.

Overall, this approach suggests that Byzantine tolerance can be achieved by decoupling update propagation from state derivation, allowing agreement on updates to be handled independently by external dissemination or consensus mechanisms.
\end{abstract}
\maketitle

\section{Introduction}

Conflict-free replicated data types (CRDTs) \cite{b1,b10,b32} provide a framework for achieving Strong Eventual Consistency (SEC) in distributed systems without coordination. They enable replicas to be updated independently, without the need for centralized or distributed synchronization, while guaranteeing that replicas that observe the same set of updates converge to the same state \cite{b0}. CRDTs span a wide spectrum of data structures, ranging from simple types such as counters, registers, and sets \cite{b10}, to more complex replicated structures \cite{b4,b13}.

CRDTs can be classified into three categories: operation-based, state-based, and delta-state based. Operation-based CRDTs \cite{b2} propagate individual updates and are well suited for high-frequency interactive applications. State-based CRDTs \cite{b10}, on the other hand, exchange entire states, simplifying correctness reasoning at the cost of increased communication overhead. To address this limitation, delta-state CRDTs ($\delta$-CRDTs) have been introduced \cite{b3,b14}, where replicas exchange small, idempotent delta mutations that can be safely applied multiple times, even over unreliable communication channels.

Classical CRDT designs assume benign participants and correct execution of the protocol. However, in open and decentralized environments, nodes may exhibit Byzantine behavior, generating malformed, conflicting, or malicious updates. As shown by Kleppmann~\cite{b23}, traditional CRDT designs do not guarantee convergence in the presence of such behavior unless additional mechanisms are introduced. In particular, validation-based approaches enforce that updates must satisfy deterministic conditions derived from their causal dependencies, ensuring that all correct replicas make consistent decisions about whether to accept or reject an update.

In this work, we explore an alternative approach that does not rely solely on validation to ensure convergence. Instead, we propose an approach based on deterministic state reconstruction built on \emph{Melda}, a delta-state CRDT designed for arbitrary JSON documents \cite{b20,b22}. Melda represents updates as immutable delta blocks organized in a causal structure, and reconstructs the state using a deterministic merge procedure. In contrast to other existing CRDT data structures, Melda is designed with practicality in mind: it operates as a non-intrusive CRDT overlay over arbitrary JSON data, requires no changes to application data models, and can be deployed over existing communication and storage systems.

Rather than deciding which updates are valid, we ensure that all admissible updates are interpreted deterministically. Our key observation is that, under suitable conditions, deterministic state reconstruction can guarantee convergence even in the presence of arbitrary update injection. In particular, if updates are causally structured, do not depend on hidden or replica-specific state, and are merged deterministically, Byzantine behavior cannot induce divergence among correct replicas, even when malicious updates are not explicitly rejected. In this setting, correctness arises not only from the admissibility of updates, but from their consistent interpretation.

This perspective also extends beyond classical CRDT designs, which focus on ensuring convergence under well-defined operations and do not explicitly account for adversarial update injection. In contrast, our model accommodates arbitrary updates—including malformed or malicious ones—and ensures convergence through deterministic interpretation rather than relying solely on the intended semantics of individual operations.

We further extend this model with optional pluggable and composable security mechanisms for authentication, authorization, and confidentiality. By introducing signature-based policies and optional encryption layers, we enable replicas to filter and restrict visibility of updates while preserving convergence within groups of replicas that share the same trust configuration.

Our main result can be summarized as follows. If two correct replicas observe the same set of updates, they deterministically derive the same state, even in the presence of Byzantine behavior. In particular, adversarial actions such as equivocation, reordering, or arbitrary update injection do not lead to divergence: updates are either consistently rejected or incorporated in a way that leads to the same outcome at all replicas.

Our approach does not introduce additional structural overhead compared to the underlying CRDT. Instead, it preserves the existing data representation and propagation mechanisms, and achieves Byzantine resilience by modifying only the state reconstruction process.

This guarantee relies on a simple structural requirement: validation, applicability, and state reconstruction are all deterministic and depend only on explicitly available data and declared dependencies. As a result, replicas that observe the same updates make the same decisions about which updates to accept, how to apply them, and how to interpret them.

Overall, this shows that convergence does not require replicas to agree on the validity of updates, but only on how a shared set of updates is interpreted.

This paper makes the following contributions:

\begin{itemize}
    \item We introduce a \emph{deterministic reconstruction model} for CRDTs, in which convergence follows from the deterministic interpretation of updates.

    \item We formalize convergence under Byzantine behavior as state derivation from a shared set of updates, and show that adversarial updates cannot induce divergence under deterministic validation and applicability.

    \item We characterize Byzantine behavior in this model, showing that equivocation and arbitrary update injection result in an all-or-nothing semantics: updates are either consistently ignored or incorporated.

    \item We show that authentication, authorization, and confidentiality can be layered as refinements of the validity predicate without affecting convergence.

    \item We instantiate the model with \emph{Melda}, a practical delta-state CRDT for JSON documents with a non-intrusive design and pluggable storage and transport layers.
\end{itemize}

The remainder of the paper presents related work, describes the Melda data model, formalizes its convergence guarantees under Byzantine behavior, and concludes with directions for future work.

\section{Related Work}
\label{sec:related}

\begin{figure*}[t]
\centering
\small
\begin{tabular}{lccc}
\textbf{Approach} & \textbf{Strategy} & \textbf{Byzantine Handling} & \textbf{Key Property} \\
\hline
Validation (Kleppmann) \cite{b23} & Accept / reject & Consistent validation & Agreement on validity \\
Equivocation-tolerant (Jacob) \cite{b24,b28} & Accept as concurrent & Absorb equivocation & No coordination \\
Blocklace \cite{b27} & Validate + detect & Detect and bound faults & Identity-based control \\
SRDT \cite{b26} & Centralized filtering & Restrict via policy & Trusted leader \\
Bounded CRDT (Baquero et al.) \cite{b33} & Cost bounding & Limit via PoW & Bounded adversarial impact \\
\textbf{Melda} & Deterministic reconstruction & Ignore via projection & Deterministic state \\
\end{tabular}
\caption{Comparison of approaches for handling Byzantine behavior in CRDT-based systems. Orthogonal mechanisms for security, such as encryption and authentication, are discussed separately.}
\label{fig:comparison}
\end{figure*}

Recent work has investigated the behavior of CRDTs in the presence of Byzantine nodes, highlighting that classical guarantees do not automatically extend to adversarial environments. Kleppmann~\cite{b23} shows that standard CRDT algorithms may diverge under Byzantine behavior, even when eventual delivery is satisfied. In particular, divergence arises when replicas make inconsistent decisions about the validity of updates. To address this issue, Kleppmann proposes a validation-based approach in which updates are accepted only if they satisfy deterministic predicates derived from their causal dependencies. Convergence is preserved provided that all replicas evaluate these predicates consistently.

Jacob et al.~\cite{b24} analyze Byzantine behavior in CRDTs through the notion of \textit{equivocation tolerance}. They show that, at the CRDT layer, Byzantine behavior can be reduced to two classes of faults: \textit{equivocation} and \textit{omission}. Equivocation refers to the ability of a faulty replica to present different, inconsistent versions of its updates to different peers, while omission denotes the selective withholding or loss of messages, preventing some updates from being delivered.

In equivocation-tolerant designs, equivocated updates are treated as causally independent and therefore as concurrent operations. As a result, such behavior does not violate convergence, but is instead absorbed into the CRDT semantics.

Under the assumption that malformed updates are filtered by deterministic validity checks, divergence does not arise from invalid data generation itself, but from inconsistent dissemination of updates across replicas. This perspective highlights that Byzantine faults at the CRDT layer primarily affect dissemination rather than the semantics of individual updates.

Building on these insights, in ~\cite{b28} a more general formulation of Byzantine-tolerant CRDTs based on \textit{extend-only directed posets} (EDPs) is proposed. In this model, replica states are represented as monotonically growing partially ordered sets, and convergence is achieved via set union as a join operation. This abstract formulation emphasizes that Byzantine tolerance arises from monotonic growth and deterministic merging, rather than from coordination or consensus.

These ideas extend naturally to DAG-based CRDT designs, where updates are organized as hash-linked nodes encoding causal dependencies. In such systems, equivocated updates are treated as concurrent operations and incorporated into the structure without requiring detection or prevention. While this approach guarantees Strong Eventual Consistency, it allows Byzantine nodes to introduce an unbounded number of updates, potentially polluting the replicated state and affecting system performance. 

This limitation motivates approaches that augment equivocation tolerance with mechanisms to detect or constrain adversarial behavior, while preserving coordination-free convergence.

One representative example is Blocklace~\cite{b27}, which extends this line of work by introducing a partially ordered DAG in which each block contains signed references to predecessors. It can be viewed as a generalized delta-state CRDT, where blocks act as delta fragments and are merged via set union under causal closure conditions. In contrast to purely equivocation-tolerant approaches, Blocklace enforces structural and semantic validity predicates and leverages cryptographic identities to detect equivocation and eventually exclude Byzantine nodes. This allows bounding the impact of adversarial behavior to a finite prefix of the computation, at the cost of increased system complexity and stronger assumptions on identity management.

Recent work has also explored how to bound the impact of Byzantine behavior in CRDT systems, rather than preventing or detecting it. In particular, Baquero et al.~\cite{b33} introduce the notion of \emph{bounded Byzantine CRDTs}, where each update is associated with a computational cost proportional to its semantic impact. By leveraging proof-of-work mechanisms, their approach ensures that the cumulative effect of adversarial updates is bounded by the computational resources available to the attacker. This model complements prior work on Byzantine-tolerant CRDTs: while cryptographic validation ensures structural correctness, bounded-cost mechanisms limit the rate and magnitude of adversarial influence.

In contrast to our approach, which tolerates arbitrary sets of updates and derives correctness through deterministic state reconstruction, bounded Byzantine CRDTs constrain the admissible behavior of replicas by imposing resource-based limits on update generation. As a result, their model preserves convergence while bounding adversarial impact, whereas our work focuses on ensuring deterministic state derivation independently of update admissibility. These approaches are orthogonal and can be combined: resource-bounded update generation can reduce the volume of adversarial inputs, while deterministic reconstruction ensures that only structurally valid dependencies contribute to the replicated state.

A complementary line of research focuses on integrating security properties such as confidentiality, integrity, and access control into replicated data systems. These works explore how replicated data types can enforce security guarantees directly at the data layer, typically through cryptographic techniques or policy-aware replication mechanisms. While orthogonal to convergence itself, such mechanisms are often required in practical deployments and can be composed with replication protocols. In this context, it is relevant to consider how these security concerns interact with deterministic state reconstruction, as explored in the following works.

Jannes et al.~\cite{b25} propose a secure replication protocol for state-based CRDTs that combines fine-grained encryption with Byzantine fault tolerance. Their approach ensures confidentiality, integrity, and attributability of updates, while preserving Strong Eventual Consistency even in the presence of Byzantine replicas. Access to data is controlled cryptographically, and only authorized users can read or modify the encrypted state.

Renaux et al.~\cite{b26} propose Secure RDTs (SRDTs), a data type that enforces role-based access control over replicated JSON data. Their approach relies on a trusted central leader that defines and enforces a global security policy. The leader provides each client with a projection of both the data and the policy tailored to its role, ensuring that replicas only contain authorized data and that unauthorized updates are rejected. This design prevents replicated data leaks and unauthorized updates while preserving eventual consistency, but introduces a centralized component.

Our work differs from these approaches by shifting the focus from validation, equivocation handling, or centralized enforcement to deterministic state reconstruction. In our approach, all updates — including malformed or adversarial ones — are propagated and stored, but the reconstructed state depends only on a deterministically defined subset of structurally valid dependencies. Rather than enforcing correctness by rejecting updates or filtering them during dissemination, we separate propagation from application: correctness is obtained as a deterministic projection over the accumulated set of updates.

In this sense, our approach can be seen as a delta-state counterpart to DAG-based CRDTs. While Blocklace introduces validation and exclusion mechanisms, SRDTs rely on centralized filtering, we show that convergence does not require global agreement on validity predicates, and can instead be ensured at the data layer through deterministic reconstruction. This perspective shifts correctness from update admissibility or invariant preservation to state derivation.

Finally, our design decouples convergence from security concerns. Authentication and authorization are expressed as refinements of the validity predicate, allowing different replicas to enforce different trust configurations while preserving deterministic convergence within each trust domain.

Figure~\ref{fig:comparison} summarizes the different strategies for handling Byzantine behavior and highlights the distinct design point explored in this work.

\section{Melda}
\label{sec:melda}
In this section, we present an overview of Melda\footnote{\url{https://github.com/slashdotted/libmelda}} with a focus on the aspects that are relevant to its Byzantine resilience. A comprehensive description of the data structure and its design rationale is provided in our previous work \cite{b20,b22}. 

\subsection{System model}

Melda is a delta-state replication framework that aims to guarantee eventual consistency in a decentralized and asynchronous environment. Each replica evolves independently by generating local updates, which are encoded as deltas and incorporated into a grow-only structure. Replicas exchange these deltas asynchronously, and incorporate those whose causal dependencies and integrity constraints are satisfied.
Unlike operation-based approaches, Melda does not rely on the delivery of individual operations in a specific order. Instead, convergence is achieved through the accumulation and reconciliation of delta states, which can be transmitted, stored, and validated independently. This design enables replicas to operate under network partitions, tolerate message reordering, and progressively converge toward a consistent state without requiring coordination, global ordering, or trusted infrastructure.

\subsection{Application data model}

Unlike CRDT frameworks such as Automerge\footnote{\url{https://automerge.org/}}, which introduce their own data model and require applications to adopt specialized data structures, Melda operates directly on the application's existing data model. Rather than replacing or redefining the structure of application data, Melda processes a JSON serialization of that model and detects changes at the serialization level. The only requirement is a deterministic serialization and deserialization to and from JSON. This allows Melda to remain non-intrusive while enabling collaborative behavior.

\subsection{Object store and versioning}

Melda is based on a grow-only collection $C$ of JSON objects, replicated across multiple sites and concurrently updated by each participant. Objects are atomic and immutable: even partial modifications result in the creation of a new version, which is appended to the collection. Deletions are recorded using tombstones.

An object $o \in C$ is a set of name-value pairs and is uniquely identified by a stable identifier $id_o$. Each version $x$ of an object is associated with a cryptographic hash $H(x)$, computed from its content (excluding the object identifier). This allows replicas to efficiently compare versions and retrieve their associated values.

Each version is also associated with a \textit{revision string}. Due to concurrent updates, multiple revision sequences may coexist for the same object, forming a revision tree. A deterministic algorithm selects the \textit{winning revision}, typically corresponding to the longest chain. The state of the CRDT is defined by the collection $C$ together with all revision trees.

\subsection{Update generation}

To perform an update, Melda processes a JSON serialization of the application's data model and decomposes it into a collection of objects. Using a reversible transformation algorithm, nested structures are flattened into the collection $C$, and references are introduced to preserve the original structure. The resulting top-level object is called the \textit{root}.
Objects are then compared with the current state, and differences generate new revisions. For large arrays, compact patches are generated relative to previous revisions to reduce storage overhead.

\subsection{State reconstruction}

To reconstruct the document from a given state, the object references in the root object (representing the starting $value$) are recursively resolved by replacing them with the values of their winning revisions.
In the presence of conflicts caused by concurrent modifications, Melda deterministically selects one of the competing revisions as the winner. However, the system allows the application layer to override this choice, enabling domain-specific conflict resolution policies.
Only revisions that are causally reachable from the root object are considered during reconstruction. Revisions that are not connected to any valid dependency chain rooted in the initial state are ignored, ensuring that only causally well-founded branches contribute to the final state.

\subsection{Array conflicts and merging}

Conflicts within arrays of objects are handled through an optional non-destructive array merging strategy. In the default model, arrays are fields of objects, and concurrent updates lead to multiple conflicting revisions of the enclosing object. Selecting a single revision may discard concurrent changes.
To mitigate this, Melda optionally elevates arrays of objects to first-class entities. These arrays are extracted and managed as independent objects in the collection $C$, with their own versioning history. Conflict resolution is then performed through a deterministic merging algorithm that integrates concurrent insertions and removals.
For example, given an array $[A, B, C]$, one replica may append \textbf{D}, producing $[A, B, C, \textbf{D}]$, while another inserts \textbf{E} between \textbf{A} and \textbf{B}, producing $[A, \textbf{E}, B, C]$. Without lifting, one revision would be selected and the other discarded. With array lifting enabled, both updates are merged into $[A, \textbf{E}, B, C, \textbf{D}]$.
This mechanism applies only to arrays of objects and does not alter the underlying CRDT semantics. It is a deterministic reconciliation step that reduces application-level complexity.

\subsection{Delta propagation, adapters, and validation}

Updates in Melda are propagated as \textit{delta blocks}. Each delta block encodes one or more revisions and references a set of prior delta blocks, called \textit{anchors}, which represent its causal dependencies. Each delta block is uniquely identified by the cryptographic hash of its serialized content, ensuring immutability and preventing equivocation.

To enforce a consistent causal structure, delta blocks are also assigned a monotonically increasing index. The index of a delta block is defined as $N = \max(\text{index of its anchors}) + 1$, ensuring that every block is causally ordered after its dependencies. This monotonic assignment guarantees that the dependency graph is acyclic and induces a partial order over updates. For convenience, a delta block may be identified by a composite name of the form $N$-\textit{digest}, where $N$ is the index and \textit{digest} is the cryptographic hash.

Actual JSON values associated with revisions are stored in \textit{data packs}, which are also content-addressed by hash. This allows replicas to verify integrity independently of the source.

\paragraph{Adapters.}
Melda interacts with storage and transport layers through an abstraction called an \textit{adapter}. An adapter is responsible for persisting and retrieving delta blocks and data packs, but does not alter the semantics of the CRDT. Adapters can be composed in a pipeline fashion, allowing additional processing layers—such as encryption, compression, or security policies—to be transparently applied. This modular design enables Melda to remain agnostic to the underlying storage and communication mechanisms, while preserving deterministic behavior.

An important consequence of this design is that Melda naturally supports \textit{opportunistic synchronization}. Since updates are materialized as immutable delta blocks and data packs, their dissemination can be delegated to external storage or synchronization systems (e.g., cloud file synchronization, shared filesystems, or version control systems). Replicas can independently commit changes locally and later incorporate remote updates when they become available, without requiring a dedicated communication protocol or persistent connectivity. This approach decouples consistency from the communication layer, enabling asynchronous, offline-first workflows while maintaining deterministic convergence.

\paragraph{Validation.}
Upon receiving data, replicas validate delta blocks and data packs before incorporation. A delta block is accepted only if:
\begin{itemize}
    \item it is a well-formed data structure;
    \item its hash matches the digest contained in its identifier;
    \item all referenced anchors (causal dependencies) are available and valid;
    \item its index satisfies the monotonicity rule described above with respect to its anchors;
    \item all required data packs are available and pass hash verification;
    \item optionally, if authentication is enabled (see Section \ref{sec:melda-sec}), the delta block carries a valid signature.
\end{itemize}

Data packs are similarly accepted only if their hash matches the expected content. Invalid or tampered data is therefore deterministically rejected by all replicas.

\paragraph{Applicability.}
A delta block is applied if and only if all its anchors have been applied and the required data is available. Otherwise, it is temporarily kept in a pending state until the missing dependencies are satisfied. This guarantees causal consistency and ensures that updates are incorporated according to the partial order induced by the dependency graph.
Crucially, both validation and applicability depend only on the content of the delta block, its hash, and its explicitly declared dependencies. They do not depend on hidden or replica-specific state. As a consequence, all correct replicas receiving the same set of delta blocks will make identical decisions about which updates are accepted and when they are applied.

\paragraph{Reconciliation.}
Since both delta blocks and data packs are content-addressed, replicas reconcile their state by exchanging hashes and retrieving missing data. This mechanism tolerates message reordering, duplication, and dissemination through untrusted intermediaries, while ensuring eventual convergence. Overall, the causal dependency graph induced by delta blocks constitutes the sole source of truth for validation and application decisions.

\subsection{Security extension: signing, authorization, and encryption}
\label{sec:melda-sec}
Melda can be extended with a pluggable security layer, referred to as \textit{melda-sec}\footnote{\url{https://github.com/slashdotted/libmelda-sec}}, which introduces cryptographic authentication, authorization, and optional confidentiality while preserving the underlying CRDT semantics.

This extension is implemented as an adapter that wraps an existing Melda adapter, following a decorator pattern. As a result, it composes naturally with other adapters in the processing pipeline (e.g., encryption or compression), and operates transparently with respect to the propagation and validation mechanisms described above.

\paragraph{Cryptographic authentication.}
Each delta block may be signed using a private key associated with the originating replica. Upon reception, signatures are verified against a set of trusted public keys maintained in a \textit{keystore}. This ensures that only updates produced by authorized identities can be considered valid.

\paragraph{Policy-based authorization.}
In addition to signature verification, melda-sec introduces a \textit{policy engine} that determines which identities are allowed to modify specific objects. Policies are evaluated based on object identifiers and may optionally refer to roles associated with public keys. During state reconstruction, only revisions that satisfy both signature verification and policy constraints are treated as valid.

Importantly, all data—including invalid or unauthorized updates—is still persisted and propagated through the system. However, such updates are deterministically ignored when reconstructing the state. This ensures that authorization does not interfere with data availability or synchronization.

\paragraph{Encryption.}
Melda can also be composed with an encryption adapter, which encrypts delta blocks and data packs prior to storage or transmission. Decryption is performed locally upon retrieval. Since encryption operates at the adapter level, it does not modify the structure of delta blocks or their dependencies, and therefore does not affect the consistency guarantees of the system.

\paragraph{Design properties.}
The security extension preserves the core properties required for convergence:

\begin{itemize}
    \item \textit{Deterministic validation:} signature verification and policy evaluation depend only on the delta block content and the local trust configuration;
    \item \textit{Separation of concerns:} propagation, validation, and authorization are decoupled, allowing invalid updates to be filtered without affecting dissemination;
    \item \textit{Compatibility with delta-state replication:} validation is local and does not require coordination or global agreement.
\end{itemize}

The extension refines the validity predicate by adding authentication and policy constraints, but does not introduce consensus or enforce a globally shared authorization policy. Replicas with the same trust configuration will converge to the same state, while different configurations may yield different views over the same underlying set of updates.

\section{Byzantine Resilience}
\label{sec:bft}

In this section, we show that Melda is resilient to Byzantine behavior, in the sense that malformed, inconsistent, or adversarial updates cannot cause divergence among correct replicas. In particular, we prove that any two replicas that derive their state from the same set of delta blocks reach the same result, regardless of the presence of Byzantine nodes.

This notion of resilience is closely related to Strong Eventual Consistency (SEC)\cite{b32}, which requires that replicas that have observed the same set of updates eventually converge to the same state. Our result can be seen as establishing SEC under Byzantine behavior, given deterministic validation, applicability, and state reconstruction.

We characterize Byzantine resilience as the composition of deterministic validation, applicability, and reconstruction over a shared set of updates.

\subsection{Adversary model}

We consider a decentralized and asynchronous system in which replicas communicate over an unreliable network, and assume that correct replicas form a connected component, such that any update disseminated by a correct node is eventually received by all other correct replicas. The adversary may control an arbitrary number of nodes and is allowed to:

\begin{itemize}
    \item generate arbitrary delta blocks and data packs;
    \item modify, reorder, delay, or drop messages;
    \item attempt to inject malformed, inconsistent, or adversarial updates.
\end{itemize}

We assume that standard cryptographic primitives hold, namely:

\begin{itemize}
    \item hash functions are collision-resistant;
    \item if authentication is enabled, digital signatures cannot be forged.
\end{itemize}

Correct replicas follow the validation and application rules described in Section~\ref{sec:melda}.

Malformed updates are excluded by local validation, while inconsistencies introduced by adversarial behavior arise from the dissemination of updates across replicas. Byzantine nodes may inject arbitrary updates or present different updates to different replicas, which are incorporated as distinct delta blocks.

We do not assume any bound on the number of updates that a Byzantine replica may generate, allowing adversarial nodes to flood the system with arbitrary delta blocks. While such behavior can impact system performance or storage size, it does not affect convergence, which depends solely on the consistent interpretation of the accumulated set of updates.

\subsection{Validity predicates}
Validity predicates define the admission criteria for updates. They determine which delta blocks can be considered for inclusion in the replicated state, and provide the basis for reasoning about consistent behavior under adversarial conditions by ensuring that all replicas agree on which updates are admissible.

Let $u$ be a delta block. We denote by $\mathrm{valid}(u)$ the base validity predicate defined in Section~\ref{sec:melda}.

When the security extension (Section~\ref{sec:melda-sec}) is enabled, the validity predicate is refined into $\mathrm{valid}_{sec}(u)$, defined as:

\[
\mathrm{valid}_{sec}(u) = \mathrm{valid}(u) \wedge \mathrm{sig}(u) \wedge \mathrm{policy}(u)
\]

where:
\begin{itemize}
    \item $\mathrm{sig}(u)$ denotes successful signature verification against a trusted public key;
    \item $\mathrm{policy}(u)$ denotes compliance with local authorization rules.
\end{itemize}

Thus, the security extension restricts the set of accepted delta blocks without altering the underlying propagation mechanism.

\subsection{Deterministic validation}
Deterministic validation ensures that all replicas make identical decisions about which updates are accepted. This prevents divergence caused by inconsistent acceptance of updates under Byzantine behavior.

\begin{lemma}[Deterministic validation]
\label{lem:deterministic-validation}
For any delta block $u$, all correct replicas with the same configuration evaluate $\mathrm{valid}(u)$ (or $\mathrm{valid}_{sec}(u)$, when the security extension is enabled) identically.
\end{lemma}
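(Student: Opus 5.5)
The plan is to show that $\mathrm{valid}(u)$ is a deterministic function of data that every correct replica either holds identically or does not yet hold, and then to handle the refinements $\mathrm{sig}$ and $\mathrm{policy}$ separately. I will unfold $\mathrm{valid}(u)$ into the conjunction of the clauses listed in the Validation paragraph of Section~\ref{sec:melda}. These are well-formedness, hash--identifier agreement, availability and validity of anchors, index monotonicity, and availability and hash verification of data packs. Since a conjunction of deterministic predicates is deterministic, it suffices to argue each clause separately. For each clause I will identify exactly what it reads: the serialized bytes of $u$, its identifier $N$-\emph{digest}, the identifiers and indices of its anchors, and the contents of referenced data packs. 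None of these depends on local clocks, arrival order, sender identity, or other replica-specific state. This is precisely the property stated in the Applicability paragraph.

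First step: the \emph{local} clauses. Well-formedness, $H(u) = \textit{digest}$, and data-pack hash checks are pure functions of byte strings. Collision resistance matters here. If two replicas hold objects with the same identifier, then with overwhelming probability they hold the same bytes, so an equivocating adversary cannot make two replicas evaluate ``the same'' $u$ on different contents. Different contents yield different identifiers, and hence different blocks $u \neq u'$. The index check $N = \max(\text{anchor indices})+1$ is likewise a function of the identifiers named in $u$ and of the anchors' indices, which are themselves fixed by those anchors' content.

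Second step: the \emph{recursive} clause ``all anchors are available and valid.'' I will prove determinism by well-founded induction on the index $N$. This is the main obstacle, because availability is replica-dependent and the recursion must terminate. Termination comes from monotonicity: anchors have strictly smaller index, so the dependency graph is acyclic (as noted in Section~\ref{sec:melda}). The base case is blocks with no anchors, for which only the local clauses apply. For the inductive step, each anchor has strictly smaller index, so the inductive hypothesis gives agreement on its validity. For availability, I will separate a \emph{definitive} verdict from a pending one. A missing anchor or pack yields a pending status, not a rejection, consistently with the Applicability paragraph. The lemma is then read as agreement on the verdict once the referenced dependencies are present. Since dependencies are content-addressed, ``present'' means identical content at every replica, so the verdict agrees once they are present.

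Finally, for $\mathrm{valid}_{sec}$ I will add two conjuncts. $\mathrm{sig}(u)$ is a deterministic verification function of $u$'s content, its signature, and the keystore. The hypothesis ``same configuration'' fixes the keystore, and unforgeability rules out accepting a signature attributed to an honest key that it did not produce. $\mathrm{policy}(u)$ is evaluated from object identifiers in $u$, the signing key, and the policy and role assignments, which are again part of the shared configuration. Both are therefore deterministic in the shared inputs. The conjunction with the already established $\mathrm{valid}(u)$ completes the argument, with the caveat that the conclusion is relative to identical trust configurations, as the statement requires.
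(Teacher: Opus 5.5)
Your proposal is correct and follows the same route as the paper. Like the paper, you decompose $\mathrm{valid}(u)$, and then $\mathrm{valid}_{sec}(u)$, into checks that are deterministic functions of the block's content, its declared anchors, its referenced data packs, and the shared trust configuration. You are more explicit than the paper's sketch in three places: the well-founded induction on the index for the recursive anchor clause (which the paper defers to Lemma~\ref{lem:acyclic}), the role of collision resistance, and the caveat that agreement holds only once the referenced dependencies are present. These are refinements of the paper's argument, not a different route.
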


\begin{proof}[Sketch]
The predicate $\mathrm{valid}(u)$ depends only on:
(i) the structure of $u$,
(ii) its hash,
(iii) its declared anchors, and
(iv) the availability and integrity of referenced data packs.

When the security extension is enabled, $\mathrm{valid}_{sec}(u)$ additionally depends on:
(v) signature verification, and
(vi) policy evaluation.

All these checks are deterministic functions of the delta block content and its explicitly referenced dependencies. Signature verification is deterministic under the cryptographic assumptions, and policy evaluation depends only on the local configuration. Therefore, replicas with identical configurations reach identical validity decisions.
\end{proof}

\subsection{Causal applicability}
Deterministic applicability guarantees that replicas not only agree on which updates are valid, but also on when they can be applied. This ensures that causal ordering is interpreted consistently across replicas, preventing divergence due to differences in application timing.

Let $A$ be the set of delta blocks already applied at a replica.

\begin{lemma}[Deterministic applicability]
\label{lem:deterministic-applicability}
For any delta block $u$, all correct replicas decide consistently whether $u$ is applicable once they have received the same set of delta blocks.
\end{lemma}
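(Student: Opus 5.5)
The plan is to reduce applicability to a deterministic function of the received set and then argue by induction over the dependency graph. Fix a finite set $R$ of delta blocks received by two correct replicas with the same configuration. Applicability was defined in Section~\ref{sec:melda} as: $u$ is applicable iff $u$ is valid, all its anchors have been applied, and all required data packs are available. I would first make the recursion explicit. Define $\mathrm{app}_R(u)$ to hold iff $u \in R$, $\mathrm{valid}(u)$ holds (or $\mathrm{valid}_{sec}(u)$), the data packs referenced by $u$ are present in $R$ with matching hashes, and $\mathrm{app}_R(a)$ holds for every anchor $a$ of $u$. The goal is to show that the set $A$ of applied blocks at any correct replica that has processed all of $R$ equals $\{u \in R : \mathrm{app}_R(u)\}$. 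From this the statement follows, since the right-hand side mentions only $R$ and $u$.

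The first step is well-foundedness. The monotone index rule $N = \max(\text{index of anchors}) + 1$ is part of $\mathrm{valid}$, so every valid block has strictly larger index than its anchors. Collision resistance means an anchor reference (index plus digest) names at most one block. Together these make the anchor relation on valid blocks acyclic, so $\mathrm{app}_R$ is well defined by recursion on the index. I would do strong induction on the index $N$. For the base case, a block whose anchors are absent or empty depends only on its own validity and data packs, and Lemma~\ref{lem:deterministic-validation} already settles validity identically at all correct replicas. For the inductive step, assume all replicas agree on applicability for every block of index $<N$. For $u$ of index $N$, its anchors have smaller index, so by hypothesis the replicas agree on whether each anchor is applied. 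Data-pack availability and integrity are membership and hash checks on $R$. Hence the conjunction defining $\mathrm{app}_R(u)$ evaluates identically.

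The step I expect to be the main obstacle is order-independence. A replica actually receives $R$ incrementally, in an adversarially chosen order, and keeps blocks pending. I must show that the final applied set does not depend on this order, so that $A$ is indeed $\{u : \mathrm{app}_R(u)\}$. For this I would argue that the pending-and-retry procedure computes the least fixpoint of the monotone operator
\[
F(X) = \{u \in R : \mathrm{valid}(u),\ \text{packs present},\ \mathrm{anchors}(u) \subseteq X\}.
\]
Two properties are needed. First, soundness: no block is applied before its anchors, by the rule itself. Second, completeness: once all of $R$ is present, any pending block whose anchors are applied is retried and applied. Since $F$ only reads $R$, any fair interleaving converges to the same fixpoint, and I would cite Knaster--Tarski or a direct induction on index. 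A subtle point to address is that validity itself requires the anchors to be ``available and valid''. I would note that this is exactly the same recursion, so it folds into the same induction without circularity.

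Finally, Byzantine inputs change nothing here. Equivocated blocks have distinct hashes and are simply distinct elements of $R$. Forged or malformed blocks fail $\mathrm{valid}$ uniformly by Lemma~\ref{lem:deterministic-validation}. Reordering or delay only affects when the fixpoint is reached, not what it is. This yields the claim once both replicas hold the same $R$.
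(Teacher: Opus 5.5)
Your argument is correct, and it takes a more explicit route than the paper. The paper's sketch is a single step: applicability is validity plus ``all anchors already applied'', anchors are explicitly declared, and validity is deterministic by Lemma~\ref{lem:deterministic-validation}, so applicability depends only on the shared set. It never argues why ``already applied'' is itself determined by the received set, even though it is a condition on the local, history-dependent set $A$. Your strong induction on the index and your least-fixpoint characterization of the pending-and-retry procedure supply exactly this missing step.

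Your route buys two things. First, it shows that this lemma already needs the well-foundedness given by the index rule; the paper proves that only afterwards, as Lemma~\ref{lem:acyclic}, and invokes it only for Lemma~\ref{lem:deterministic-state}. Second, it turns order-independence under adversarial scheduling into a claim with a proof. Part of that is your observation that $\mathrm{valid}$ is itself recursive through ``anchors available and valid''. It must therefore be evaluated relative to a fixed received set $R$, and re-evaluated on retry rather than settled once on arrival. The paper's version buys brevity and keeps the lemmas decoupled, but leaves this recursive, order-sensitive part of the claim unargued.

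Two points would tighten your proof. First, run the induction on the index using only the local, non-recursive checks: well-formedness, hash, the index rule against the declared anchor indices, data packs, signature and policy. That way well-foundedness is not derived from the recursive predicate you are defining. Second, state explicitly that each of these checks is a positive presence or integrity test, hence monotone in the set of received blocks. This monotonicity is what makes every intermediate applied set sound with respect to $F$ over the final $R$, and your fixpoint argument currently uses it without saying so.
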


\begin{proof}[Sketch]
A delta block $u$ is applicable if and only if:
(i) it satisfies the validity predicate, and
(ii) all its anchors have already been applied.

Since the set of anchors is explicitly declared, and validity decisions are deterministic (Lemma~\ref{lem:deterministic-validation}), applicability depends only on the shared set of delta blocks. Therefore, replicas that have received the same set make identical applicability decisions.
\end{proof}

\subsection{Acyclic dependency graph}
Acyclicity ensures that the dependency relation is well-founded, allowing validation to be defined as a finite recursive process over dependencies. This prevents adversarial updates from introducing cycles that would otherwise make validation non-terminating or undefined.

\begin{lemma}[Acyclicity]
\label{lem:acyclic}
The dependency graph induced by delta blocks is a directed acyclic graph.
\end{lemma}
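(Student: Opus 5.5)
The plan is to exhibit a strict monotone ranking on the dependency graph. Only accepted delta blocks appear in the graph, since validation precedes incorporation. Let $G=(V,E)$ have as vertices the delta blocks accepted by a correct replica, with an edge $u \to v$ whenever $v$ is an anchor of $u$. I will use the index function $\iota : V \to \mathbb{N}$ from Section~\ref{sec:melda}. The validation rules require, for every accepted block $u$, that $\iota(u) = \max\{\iota(v) : v \text{ anchor of } u\} + 1$. Consequently every edge $u \to v$ satisfies $\iota(v) < \iota(u)$, so $\iota$ strictly decreases along each edge of $G$.

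The main step is then a short contradiction. Suppose $G$ contained a directed cycle $u_0 \to u_1 \to \dots \to u_k = u_0$ with $k \ge 1$. Chaining the strict inequalities along the cycle gives $\iota(u_0) > \iota(u_1) > \dots > \iota(u_k) = \iota(u_0)$, which is impossible. Self-loops are the case $k=1$ and are excluded by the same argument. Since $\iota$ takes values in the well-ordered set $\mathbb{N}$, there is also no infinite descending dependency chain. This gives well-foundedness, which the recursive validation process described before the lemma needs in addition to acyclicity.

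The obstacle I expect is justifying that the index constraint actually binds adversarial blocks. A Byzantine node may write any index it likes. It may also try to construct blocks that reference each other through their hashes. I would handle this in two ways.

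First, the monotonicity check is part of $\mathrm{valid}(u)$. By the rule that all anchors must be available and valid, it is applied only after the anchors are themselves accepted. Blocks whose indices violate the rule are therefore never in $V$. By Lemma~\ref{lem:deterministic-validation}, this exclusion is the same at every correct replica.

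Second, I would add an independent argument from content addressing. A block's identifier is the hash of its serialized content, and that content includes its anchors. Building a cycle would therefore require a block whose content contains, directly or transitively, its own hash. Under collision resistance, such a fixed point cannot be found efficiently.

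Either argument suffices on its own. I would present the index argument as the main proof and the hash argument as a remark explaining why the adversary cannot sidestep it.
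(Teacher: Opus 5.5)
Your proof is correct and follows the paper's argument: the index $N = \max(\text{indices of anchors}) + 1$ is strictly monotone along every dependency edge, so a cycle would force an index to be strictly smaller than itself. Your point that the monotonicity check inside $\mathrm{valid}(u)$ is what forces adversarially chosen indices to obey this rule makes explicit something the paper's sketch leaves implicit. The hash-based remark is not needed, and claiming that ``either argument suffices'' overstates it. Collision resistance alone does not formally rule out a block that (transitively) contains its own hash; that requires a preimage- or random-oracle-style assumption. So let the index argument carry the proof on its own.
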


\begin{proof}[Sketch]
Each delta block is assigned an index defined as $N = \max(\text{indices of its anchors}) + 1$. Therefore, every dependency edge goes from a node with strictly lower index to one with strictly higher index. This strictly monotonic ordering prevents cycles.
\end{proof}

\subsection{Deterministic state}
Deterministic state reconstruction ensures that the final application state depends only on the set of accepted updates, rather than on the order in which they are applied. This eliminates a primary source of nondeterminism and guarantees convergence once a common set of updates is established.

\begin{lemma}[Deterministic state]
\label{lem:deterministic-state}
Let $U$ be a set of accepted delta blocks. The reconstructed state $\mathrm{state}(U)$ is uniquely determined.
\end{lemma}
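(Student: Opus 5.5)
The plan is to show that $\mathrm{state}(U)$ is a function of the set $U$ alone, by writing reconstruction as a composition of deterministic steps, each of which depends only on $U$. The argument proceeds in three steps, following the pipeline of Section~\ref{sec:melda}: (1) collect revisions, (2) select winners, (3) resolve references from the root.

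\textbf{Step 1: the revision collection is order-independent.} Let $R(U)$ be the union of the revisions encoded in the blocks of $U$, together with the data packs they reference. Because $U$ is a set and union is commutative, associative and idempotent, $R(U)$ does not depend on the order in which blocks were received or applied. Duplicate deliveries are harmless as well: blocks and data packs are content-addressed, and collision resistance guarantees that equal identifiers denote equal content. The same holds for the revision trees built from $R(U)$. Every revision string carries its parent chain, and by Lemma~\ref{lem:acyclic} the induced structure is well-founded. So each revision tree is a function of $R(U)$ alone.

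\textbf{Step 2: winner selection is a pure function of the revision tree.} For each object identifier $id_o$, the winning revision is chosen by a fixed deterministic rule: the longest chain, with ties broken deterministically, for instance by hash order. The rule is evaluated on the revision tree, which by Step 1 depends only on $U$. If lifted arrays are enabled, the array merge must also be a deterministic function of the set of concurrent revisions; I would state this as a property of the algorithm. Any application-level override must likewise be a deterministic function of the same inputs, and I would make that an explicit hypothesis. When the security extension is active, only revisions satisfying $\mathrm{valid}_{sec}$ are admitted. By Lemma~\ref{lem:deterministic-validation}, this filtering is identical for all replicas sharing a configuration.

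\textbf{Step 3: reference resolution from the root is well-defined, and this is the main obstacle.} Reconstruction starts at the root's winning revision and recursively replaces each reference with the value of the referenced object's winning revision. Revisions not reachable from the root are ignored. Two things must be shown: the recursion is deterministic, and it terminates. Determinism follows by structural induction, since each step is a lookup in the functions fixed by Steps 1--2. Termination is harder. Acyclicity of the delta-block graph (Lemma~\ref{lem:acyclic}) does not by itself exclude a cycle of object references, which an adversary could inject. I would handle this in one of two ways. The first is to argue that flattening a finite JSON tree yields an acyclic reference forest, so that any reference cycle is structurally invalid and rejected. The second is to define resolution with a visited set, so that an already-visited reference maps to a fixed deterministic placeholder. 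Either way, the output is a deterministic function of $R(U)$, and hence of $U$. Composing the three steps shows that $\mathrm{state}(U)$ is uniquely determined.
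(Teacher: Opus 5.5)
Your decomposition matches the paper's sketch. The paper also reduces the lemma to deterministic construction of revision trees, deterministic winner selection, an idempotent, associative and commutative merge, and restriction to what is reachable from the root.

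The difference is how order-independence is obtained. The paper argues operationally: Lemma~\ref{lem:acyclic} gives a topological order, and validation makes $U$ causally closed. So every block of $U$ eventually becomes applicable, and with Lemma~\ref{lem:deterministic-applicability} the application order is irrelevant. You instead define the state directly as a function of the set $R(U)$. That makes order-independence immediate, but it tacitly assumes a replica holding $U$ ends up with all of $R(U)$ applied, which is exactly what the paper's causal-closure step supplies. Your explicit treatment of tie-breaking, array merging, overrides and termination goes beyond the paper, whose sketch simply takes reconstruction to be a total function.

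Two of your justifications need repair.

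\emph{Step 3, first option.} This option fails, because a reference cycle need not be visible in any single block. Per-object winner selection can assemble individually valid revisions from different blocks into a cycle. For instance, suppose one honest replica moves $B$ under $A$ while another concurrently moves $A$ under $B$. Then $A$'s winning revision comes from one block and $B$'s from the other, and they reference each other. Validation depends only on a block and its declared dependencies (Lemma~\ref{lem:deterministic-validation}), so it cannot reject this; in any case $\mathrm{valid}(u)$ contains no such check. Only your second option closes the gap, and it must be stated as part of the definition of reconstruction. Cut a reference only when it points to an object already on the current root-to-node path, or else fix a canonical traversal order. With a global visited set, where the placeholders land depends on the order of traversal.

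\emph{Step 1.} Lemma~\ref{lem:acyclic} concerns anchor edges between delta blocks, constrained by their indices. Revision parent links are payload inside blocks and are not constrained by it; this is the same objection you raise yourself in Step 3. Well-foundedness of revision trees needs its own argument, such as a finite encoding of the parent chain in the revision string, or a strictly increasing revision number along parent links. Alternatively, you can sidestep it by selecting winners by revision number with a deterministic tie-break.
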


\begin{proof}[Sketch]
Since the dependency graph is acyclic (Lemma~\ref{lem:acyclic}), the set $U$ admits a topological ordering. Since validation ensures that all dependencies of accepted delta blocks are satisfied, the set of updates is causally closed. As a result, delta blocks can be applied in any order. The state is then fully determined by the reconstruction process.

Since:
(i) applicability is deterministic (Lemma~\ref{lem:deterministic-applicability}),
(ii) revision trees are constructed deterministically,
(iii) winning revisions are selected deterministically, and
(iv) the merge of delta states is idempotent, associative, and commutative,
the resulting state is uniquely determined by $U$, independently of the order of application.

Only revisions that are causally reachable from the root contribute to the reconstruction, ensuring that semantically irrelevant or disconnected branches are consistently ignored across replicas.
This property implies that neither message omission nor equivocation can affect the final state beyond the selection of $U$: omission determines which updates are included in $U$, while equivocation can only introduce additional elements into $U$, both of which are handled deterministically.
\end{proof}

\subsection{Consistency of trust configuration}
Consistency of trust configuration is required to ensure that replicas sharing the same security assumptions evaluate updates in the same way. This allows different trust domains to coexist, while preserving convergence within each domain.

\begin{lemma}[Consistency of trust configuration]
\label{lem:trust-consistency}
If two replicas have identical sets of trusted public keys and identical policy rules, then they evaluate $\mathrm{valid}_{sec}(u)$ identically for all delta blocks $u$.
\end{lemma}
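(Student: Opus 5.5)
The plan is to unfold the definition $\mathrm{valid}_{sec}(u) = \mathrm{valid}(u) \wedge \mathrm{sig}(u) \wedge \mathrm{policy}(u)$ and show that each conjunct is a function only of $u$, its explicitly referenced dependencies, and the local trust configuration. Here the trust configuration is the pair (set of trusted public keys $K$, policy rules $P$). Once this is established, two replicas with identical $(K,P)$ compute identical conjuncts and hence identical conjunctions. This is essentially a specialization of Lemma~\ref{lem:deterministic-validation}. The present statement makes explicit what ``same configuration'' means: the only replica-specific inputs are $K$ and $P$.

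\textbf{The base predicate.} For $\mathrm{valid}(u)$ I would invoke Lemma~\ref{lem:deterministic-validation} directly. Its checks are well-formedness, hash match, the anchor index rule, availability and validity of anchors, and data-pack integrity. None of them refers to $K$ or $P$. Anchor validity is defined recursively, and that recursion terminates by Lemma~\ref{lem:acyclic}. I would therefore phrase this part as an induction on the index $N$ of $u$: if the two replicas agree on validity for every block of smaller index, they agree on $u$. This requires the recursive anchor check to use $\mathrm{valid}_{sec}$ consistently. Otherwise an anchor could be judged differently under the secure predicate than under the base one.

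\textbf{The security conjuncts.} For $\mathrm{sig}(u)$, verification is a deterministic algorithm taking the signature carried in $u$, the signed content of $u$, and a candidate key from $K$. The predicate holds iff some key in $K$ verifies, so it is a function of $(u,K)$. Unforgeability is not needed for agreement, only for the meaning of the result. For $\mathrm{policy}(u)$, the policy engine evaluates rules in $P$ using object identifiers carried in $u$'s revisions and roles bound to the signing key, which comes from $K$. It is therefore a function of $(u,K,P)$.

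\textbf{Main obstacle.} The step I expect to be hardest is ruling out hidden inputs to policy evaluation. If rules could depend on previously reconstructed state, such as dynamic roles, or on the arrival order of blocks, then $\mathrm{policy}(u)$ would no longer be a function of $(u,K,P)$ alone. I would handle this by taking the paper's description at face value: policies are evaluated on object identifiers and key roles from the local configuration, with no state dependence. I would state this explicitly as the operating assumption, and note that the result would fail if it were violated. With that assumption in place, the induction on index closes and the lemma follows.
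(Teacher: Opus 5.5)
Your proposal is correct and follows the same route as the paper. The paper's sketch simply notes that $\mathrm{sig}(u)$ depends only on the keys in the local keystore and $\mathrm{policy}(u)$ only on the configured rules, so identical configurations give identical evaluations of $\mathrm{valid}_{sec}(u)$. Your extra steps fill in what that sketch leaves implicit: treating the base conjunct via Lemma~\ref{lem:deterministic-validation} with induction on the index, and stating explicitly that policies must not depend on reconstructed state or arrival order. One small point: agreement does not actually require the anchor check to use $\mathrm{valid}_{sec}$. Both replicas run the same predicate definitions, and the base predicate never consults $(K,P)$.
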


\begin{proof}[Sketch]
Signature verification depends only on the public keys available in the local keystore, and policy evaluation depends only on the configured rules. If these configurations are identical, then the evaluation of $\mathrm{valid}_{sec}(u)$ is identical across replicas.
\end{proof}

\subsection{Byzantine resilience}
Taken together, the previous properties ensure that acceptance, application, and interpretation of updates are all deterministic functions of the observed data. This allows convergence to be reduced to agreement on the set of updates, independently of adversarial behavior.

\begin{theorem}[Byzantine resilience]
\label{thm:byzantine-resilience}
Let $p$ and $q$ be two correct replicas that have incorporated the same set $U$ of delta blocks, and such that validation, applicability, and reconstruction are deterministic. If $p$ and $q$ share the same configuration (i.e., they evaluate $\mathrm{valid}$, respectively $\mathrm{valid}_{sec}$, identically), then:
\[
\mathrm{state}_p(U) = \mathrm{state}_q(U)
\]
even in the presence of Byzantine nodes.
\end{theorem}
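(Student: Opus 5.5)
The plan is to show that $\mathrm{state}_p(U)$ and $\mathrm{state}_q(U)$ are the same deterministic function applied to the same input. Concretely, I would view a replica's reconstruction as the composition of three maps: a filtering step that extracts from $U$ the accepted and applied subset $A(U)$; the construction of revision trees and winning revisions from $A(U)$; and the recursive resolution of references from the root. I would then show that each map depends only on its argument and the shared configuration. Equality follows by composing.

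\emph{Step 1 (filtering).} By Lemma~\ref{lem:deterministic-validation}, and by Lemma~\ref{lem:trust-consistency} when the security extension is enabled, $p$ and $q$ give every $u \in U$ the same validity verdict. Applicability is not a pointwise property, because it refers to the set of already-applied blocks. I would therefore define $A(U)$ intrinsically, as the least set closed under the rule ``$u$ is valid and all its anchors lie in $A(U)$''. By Lemma~\ref{lem:acyclic} the anchor relation is well-founded. Induction on the index $N$ then shows that membership in $A(U)$ is decided uniquely, and that the set reached by $p$ and by $q$ at quiescence is exactly $A(U)$, whatever the arrival order. This combines Lemma~\ref{lem:deterministic-applicability} with the fact that blocks stay pending until their anchors are applied. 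Collision resistance ensures that identifiers and anchor references name the same content at both replicas.

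\emph{Step 2 (reconstruction).} I would invoke Lemma~\ref{lem:deterministic-state} on $A(U)$. It yields that the revision trees, the winning revisions, and the root-reachable projection are uniquely determined by $A(U)$, independently of application order, because the merge is idempotent, associative and commutative. Since $A(U)$ is the same set for both replicas, $\mathrm{state}_p(U)=\mathrm{state}_q(U)$.

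\emph{Step 3 (Byzantine behaviours).} I would argue that the adversary's capabilities enter only through the contents of $U$, which are fixed by hypothesis. Equivocated blocks have distinct hashes and so are simply distinct elements of $U$. Malformed or unauthorized blocks are excluded from $A(U)$ identically at both replicas by Step~1. Reordering, delay and duplication do not change the set $U$. Omission only changes which $U$ is observed, and that is outside the hypothesis.

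The main obstacle is Step~1. The earlier lemmas treat applicability relative to a replica's current applied set, so a naive argument could let arrival order leak into the result. The fixpoint characterisation of $A(U)$, together with induction over the index from Lemma~\ref{lem:acyclic}, is what closes that gap. I would prove it first, by showing that the greedy pending-queue process always terminates in the least fixpoint.
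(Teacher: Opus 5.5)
Your proposal is correct and follows the same route as the paper: Byzantine behaviour (omission, equivocation, reordering) acts only through the contents of $U$, and equality of states then follows from deterministic validation (Lemmas~\ref{lem:deterministic-validation} and~\ref{lem:trust-consistency}), deterministic applicability, and order-independent reconstruction (Lemma~\ref{lem:deterministic-state}). Your least-fixpoint characterisation of the applied set $A(U)$, proved by induction on the index, is a useful sharpening: it makes explicit the independence from arrival order and the causal closure that the paper takes for granted via Lemma~\ref{lem:deterministic-applicability} and the proof sketch of Lemma~\ref{lem:deterministic-state}.
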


\begin{proof}[Sketch]
Consider Byzantine behavior expressed through omission and equivocation.

\begin{itemize}
    \item \textbf{Omission.} Missing or delayed messages affect only which delta blocks are included in the set $U$.

    \item \textbf{Equivocation.} A Byzantine replica may present different delta blocks to different replicas.
    
    However:
    \begin{itemize}
        \item if equivocated updates are invalid, they are rejected by all replicas (Lemma~\ref{lem:deterministic-validation});
        \item if they are valid, they are treated as distinct delta blocks and incorporated into $U$.
    \end{itemize}
    Since state reconstruction is deterministic (Lemma~\ref{lem:deterministic-state}), these additional updates are interpreted consistently by all replicas that observe them.
\end{itemize}

Therefore, Byzantine behavior cannot cause inconsistent acceptance or interpretation of updates. Each delta block is either:
(i) rejected by all replicas,
(ii) incorporated into $U$ and ignored by some configurations, or
(iii) incorporated into $U$ and deterministically applied by all replicas sharing the same configuration.

As a consequence, two replicas that observe the same set $U$ derive the same state, regardless of adversarial behavior.

Hence:
\[
\mathrm{state}_p(U) = \mathrm{state}_q(U)
\]
\end{proof}

\subsection{Discussion}

When the security extension is enabled, convergence depends not only on the set of received delta blocks, but also on the trust configuration (i.e., the set of trusted public keys and the policy rules).

Replicas with identical configurations will converge to the same state, while replicas with different configurations may derive different subsets of accepted updates.

In this sense, melda-sec refines the validity predicate by introducing authentication and authorization constraints. This allows Byzantine updates to be not only harmless with respect to convergence, but also selectively excluded from the reconstructed state based on application-defined trust policies.

\subsection{Toward Full Byzantine Fault Tolerance}

The results presented in this work establish convergence under Byzantine behavior for any given set of accepted updates. However, they do not guarantee that all correct replicas will necessarily observe the same set of updates. In particular, omission and equivocation may lead to different replicas incorporating different subsets of delta blocks before reconciliation.

This observation suggests a modular decomposition of Byzantine fault tolerance. On one hand, dissemination and agreement mechanisms are responsible for ensuring that replicas eventually observe the same set of updates. On the other hand, deterministic state reconstruction ensures that all replicas derive the same state from that set. In this perspective, convergence is achieved as the composition of two orthogonal properties: agreement on the set of updates, and deterministic interpretation of those updates.

Different system designs can instantiate this decomposition. For instance, a centralized storage layer trivially enforces agreement at the cost of decentralization. Gossip-based dissemination ensures eventual propagation of updates, but does not guarantee agreement in adversarial settings. Stronger guarantees can be obtained by integrating a consensus mechanism that establishes quorum-based acceptance of delta blocks, ensuring that all correct replicas converge toward the same set of updates.

Importantly, Melda is compatible with all these approaches, as it does not impose constraints on the dissemination layer. This makes it possible to compose deterministic reconstruction with different communication and agreement strategies, ranging from fully decentralized to strongly coordinated systems.

These observations suggest that deterministic reconstruction provides a reusable foundation for Byzantine-resilient data systems, onto which stronger guarantees—such as agreement and fault containment—can be layered independently.

This separation highlights that deterministic reconstruction is orthogonal to agreement: it ensures that replicas cannot diverge once a common set of updates is established, independently of how that set is obtained.

\section{Conclusion}
\label{sec:conclusion}

In this paper, we explored an alternative perspective on Byzantine-tolerant CRDT design based on \emph{deterministic state reconstruction}. Rather than relying solely on agreement over update admissibility, we showed that convergence can be achieved by ensuring that propagated updates are interpreted deterministically.

We formalized this model and proved that divergence cannot occur among correct replicas that observe the same set of updates, even in the presence of Byzantine behavior. Adversarial actions such as equivocation, message reordering, or arbitrary update injection do not lead to inconsistent states: updates are either rejected due to structural invalidity or incorporated as deterministic inputs into the reconstruction process. This shifts correctness from update admissibility alone to the combination of validation and deterministic state derivation.

We demonstrated that this model is realizable through Melda, a delta-state CRDT for JSON documents that operates directly on application data without requiring intrusive data models or centralized coordination. By structuring updates as causally ordered delta blocks and applying deterministic reconstruction, Melda naturally tolerates reordering, duplication, and adversarial dissemination patterns.

We further showed that security mechanisms such as authentication, authorization, and confidentiality can be integrated as refinements of the validity predicate without affecting convergence: replicas with identical trust configurations converge to the same state, while different configurations yield internally consistent views over the same updates.

Overall, our work highlights a complementary design point in which robustness to adversarial behavior is achieved not only through validation or exclusion, but also through deterministic interpretation of accumulated updates.

Future work includes integrating peer-to-peer dissemination adapters (e.g., gossip-based protocols), further evaluating role-based authorization in decentralized settings, extending the Melda-sec model with richer cryptographic primitives, and studying more expressive reconstruction strategies and performance under adversarial workloads.

Future work also includes exploring mechanisms for key revocation and trust evolution, such as combining blacklist-based exclusion with content-based whitelisting to preserve previously accepted updates without breaking causal dependencies.

\bibliographystyle{acm}
\bibliography{melda-bft}

\end{document}